\documentclass[12pt]{article}
\marginparwidth 0pt
\oddsidemargin  0pt
\evensidemargin  0pt
\marginparsep 0pt
\topmargin -1.0truecm
\textheight 24.0 truecm
\textwidth 16.0truecm

\newtheorem{proposition}{Proposition}

\newtheorem{corollary}{Corollary}

\newtheorem{example}{Example}
\newenvironment{proof}{\noindent{\bf Proof:}}{\hfill\fbox{}\vspace*{1mm}}
\usepackage[usenames]{color}
\usepackage{graphicx}

\begin{document}
\title{\bf On Pricing Basket  Credit Default Swaps }
\author{Jia-Wen Gu
\thanks{Advanced Modeling and Applied Computing Laboratory,
Department of Mathematics, The University of Hong Kong,
Pokfulam Road, Hong Kong. Email:jwgu.hku@gmail.com.
}
\and Wai-Ki Ching
\thanks{Advanced Modeling and Applied Computing Laboratory,
Department of Mathematics, The University of Hong Kong, Pokfulam
Road, Hong Kong. E-mail: wching@hku.hk. Research supported in
part by RGC Grants 7017/07P, HKU CRCG Grants
and HKU Strategic Research Theme Fund on Computational Physics and Numerical Methods.}
\and Tak-Kuen Siu
\thanks{ Department of Applied Finance and Actuarial Studies,
Faculty of Business and Economics, Macquarie University,
Macquarie University, Sydney, NSW 2109, Australia. Email: ken.siu@mq.edu.au,
ktksiu2005@gmail.com,}
\and Harry Zheng
\thanks{ Department of Mathematics,
Imperial College, London, SW7 2AZ, UK. Email: h.zheng@imperial.ac.uk.}
}
\date{}
\maketitle
\begin{abstract}
In this paper we propose a simple and efficient method to compute the ordered default time distributions in both the homogeneous case and  the two-group heterogeneous case under the interacting intensity default contagion model.
We give the analytical expressions for the ordered default time
distributions with recursive formulas for the coefficients, which makes the calculation fast and efficient in finding rates of basket CDSs.
In the homogeneous case, we { explore the ordered default time in limiting case and} further include the exponential decay and the multistate stochastic intensity process.
The numerical study indicates that, in the valuation of the swap rates and their sensitivities with respect to underlying parameters, our proposed model outperforms the Monte Carlo method.
\end{abstract}

\noindent
{\bf Keywords:} Basket Credit Default Swaps; Interacting Intensity; Ordered Default Time Distribution; 
 Analytic Pricing Formula; Recursive Formula; Stochastic Intensity.
\newpage
\section{Introduction}

Modeling portfolio default risk is a key topic in credit risk management.
It has important applications  and implications in pricing and hedging credit derivatives as well as risk measurement and
management of credit portfolios. There are two strands of literature
on credit risk analysis, namely, the structural firm value approach
pioneered by Black and Scholes (1973) and Merton (1974),
and the reduced-form intensity-based approach introduced by
Jarrow and Turnbull (1995) and Madan and Unal (1998).
In the classical firm value approach
the asset value of a firm is described by a geometric Brownian motion and
the default is triggered when the asset value  falls below a given default
barrier level. In the reduced-form intensity-based
approach, defaults are modeled as exogenous events
and their arrivals are described by using random point processes.

The reduced-form intensity-based approach has been
widely adopted for modeling portfolio default risk.
Two major types of reduced-form intensity-based models
for describing dependent defaults
are bottom-up models and top-down models.
Bottom-up models focus on modeling default intensities of
individual reference entities and their aggregation to form
a portfolio default intensity.
Some works on the bottom-up approach
for portfolio credit risk include Duffie and Garleanu (2001), Jarrow and Yu (2001),
Sch\"onbucher and Schubert (2001), Giesecke and Goldberg (2004),
Duffie, Saita and Wang (2006) and Yu (2007) etc.
These works differ mainly in specifying default intensities of
individual entities and their portfolio aggregation.
Top-down models concern modeling default at  portfolio
level. A default intensity for the whole portfolio
is modeled without reference to the identities of the
individual entities. Some procedures such as
random thinning can be used to recover the default
intensities of the individual entities.
Some works on top-down models
include Davis and Lo (2001), Giesecke and Goldberg (2005),
Brigo, Pallavicini and Torresetti (2006),
Longstaff and Rajan (2007) and Cont and Minca (2008).

One of the major applications of portfolio default risk
models is the valuation of credit derivatives written
on portfolios of reference entities.
Typical examples are collateralized debt obligations (CDOs)
and basket credit default swaps (CDSs).
The key to valuing these  derivatives is to know the portfolio
loss distribution function.
The $k${th} to default basket Credit Default Swap (CDS) is
a popular type of multi-name credit derivatives.
The protection buyer of a $k${th} to default basket CDS contract
pays periodic premiums to the protection seller of the contract
according to some pre-determined swap rates until the occurrence
of the $k${th} default in a reference pool. Whereas, the
protection seller of the $k${th} to default basket CDS
pays to the protection buyer of the contract the amount
of loss due to the $k${th} default in the pool
when it occurs. Different approaches have been
proposed in the literature for pricing the
$k${th} to default basket CDS under the intensity-based default contagion model. Herbertsson \& Rootz$\acute{e}$
(2006) introduce a matrix-analytic approach to value the $k$th
to default basket CDS. They transform the interacting intensity default
process to a Markov jump process which represents the default status in the portfolio.
This makes it possible to apply the matrix-analytic method to
derive a closed-form expression for the $k$th to default CDS.
Yu (2007) adopts the total hazard construction method by Norros (1987) and
Shaked \& Shanthikumar (1987) to generate default times with
a broad class of correlation structure.
He also compares this approach with the standard reduced-form models
and alternative approaches involving copulas.
Zheng \& Jiang (2009) use the total hazard construction to
derive the joint distribution of default times.
They give a closed-form expression for the joint distribution of
the general interacting intensity default process and
an analytical formula for valuing a basket CDS in a
homogeneous case.

Here we propose a simple and efficient method to
derive the $k$th default time distribution under the interacting intensity default contagion model.
We give the recursive formulas for the ordered default time
distributions, and further derive the analytic solutions
in a group of homogeneous entities and in two groups of heterogeneous entities.
In the homogeneous case, we {discuss the ordered default time in limiting case and} further include the exponential decay and the stochastic intensity process.
We derive the pricing formula under a two-state, Markovian regime-switching stochastic intensity model.
In addition, we show that our proposed method is superior to the simulation method in studying the sensitivities of the swap rates to changes of underlying parameters.

The rest of the paper is organized as follows.
Section 2 gives a snapshot for the interacting intensity-based default model.
Section 3 discusses the homogeneous case and applies the recursive method to characterize the ordered default time distributions derived in Zheng \& Jiang (2009).
Section 4 extends the method to study the multi-state stochastic intensity process.
Section 5 addresses the two-group heterogeneous case.
Section 6 presents numerical experiments for the evaluation of
the basket CDS under various scenarios and the sensitivity analysis.
Section 7 concludes the paper.

\section{A Snapshot for Interacting Intensity-Based Default Model}


In this section, we  give some preliminaries for the paper to facilitate our discussion.
Let $(\Omega, \mathcal{F}, \{\mathcal{F}_t\}_{t \geq 0}, P)$
be a complete filtered probability space, where we assume $P$ is a risk-neutral martingale measure (such a $P$ exists if we preclude the arbitrage opportunities), and
$\{\mathcal{F}_t\}_{t \geq 0}$ is a filtration satisfying the usual conditions,
(i.e., the right-continuity and $P$-completeness). We consider
a portfolio with $n$ credit entities. For each $i = 1, 2, \cdots, n$,
let $\tau_i$ be the default time of name $i$. Write
$N_i(t)=1_{\{\tau_i \leq t\}}$ for a single jump process
associated with the default time $\tau_i$, and
$\{ \mathcal{F}^i_t \}_{t \geq 0}$ is the right-continuous,
$P$-completed, natural filtration generated by $N_i$.
Suppose $\{ X_t \}_{t \geq 0}$ is the state process,
which represents the common factor process for joint
defaults. Write $\{ \mathcal{F}^X_t \}_{t \geq 0}$
for the right-continuous, $P$-completed, natural
filtration generated by the process $\{ X_t \}_{t \geq 0}$.
For each $t \geq 0$, write
$$
\mathcal{F}_t = \mathcal{F}_t^X \vee\mathcal{F}_t^1 \vee \ldots \vee \mathcal{F}_t^n \ .
$$
Here $\mathcal{F}_t$ represents the minimal $\sigma$-algebra containing information
about the processes $X$ and $\{ N_i \}^{n}_{i = 1}$ up to and including time $t$.

We assume that for each $i = 1, 2, \cdots,n$,
${N_i}$ possesses a nonnegative, $\{ \mathcal{F}_t \}_{t \geq 0}$-predictable,
intensity process $\lambda_i$ satisfying
$$
E\left( \int_0^t \lambda_i(s) ds \right) < \infty, \quad  t\geq 0,
$$
such that the compensated process:
$$
M_i(t) := N_i(t)-\int_0^{t \wedge \tau_i} \lambda_i(s) ds \ , \quad t \geq 0 \ ,
$$
is an $(\{\mathcal{F}_t\}_{t \geq 0}, P)$-martingale.
We further assume that the stochastic process $X$ is ``exogenous''
in the sense that conditional on the the whole path of $X$, (i.e.,
$\mathcal{F}_\infty^X$), $\lambda_i$ are $\{\bigvee_i {\mathcal{F}}_t^i\}_{t \geq 0}$-predictable.

To model the interacting intensity default process, we consider the following form:
\begin{equation}\label{e1}
\lambda_i(t) = a_{i}(t)+\sum_{j \neq i} b_{ij}(t) e^{-d_{ij}(t-\tau_j)}1_{\{\tau_j \leq t\}},
\quad i=1,2,\ldots,n,
\end{equation}
where $a_{i}(t)$ and $b_{ij}(t)$ are $\mathcal{F}^X$-adapted processes, and $d_{ij}$ are positive constants representing the rates of decay.
The introduction of exponential decay into the intensity-based default process is of practical significance, which indicates once a default occur in the portfolio, its effect on the other surviving entities will decrease at a rate proportional to its present impact.
Intensity rate processes in (\ref{e1}) determine the probability laws of the default times.
To price the $k$th to default basket CDS,
the distribution of the $k$th default time has to be known.

\section{Homogeneous Case}

In this section, we present a simple method to derive
the distribution of the $k$th \ default time in a group of
homogeneous entities under the interacting intensity default model.
Our method is based on the $k$th default rate and
the distribution of the random duration between
two defaults, where the contagion intensity process follows:
\begin{equation}\label{p1}
\lambda_i(t) = a \left(1  +  \sum_{j\neq i} c e^{-d(t-\tau_j)} 1_{\{\tau_j \leq t\}} \right),
\quad i = 1, 2, \ldots, n \ ,
\end{equation}
where $a$ is a positive constant and $c$ and $d$ are nonnegative constants.

We note that the Markov chain approach (Herbertsson \& Rootz$\acute{e}$(2006)) cannot solve this kind of processes with exponential decay ($d>0$).
Zheng \& Jiang(2009) adopt the total hazard construction method to give the joint distribution of default time $\tau_k$, $k=1,2,\ldots, n$,
while finding the ordered default time distributions remains to be intractable.
Here we give the recursive formula of the joint distribution of the ordered default times by our proposed method.
Let $\lambda^{k+1}(t)$ be the $(k+1)$th default rate at
time $t$ that triggers $\tau^{k+1}$.
Then $\lambda^{k+1}(t)$ will be the sum of the default rates of the surviving
entities after $\tau^k$. Under the homogeneous situation,
given the realization of $\tau^i, i=1,\ldots,k$, the ($k+1$)th default rate is:
$$
\lambda^{k+1}(t)= a(1+\sum_{i=1}^k c e^{-d(t-\tau^i)})(n-k),
$$
where $\tau^k < t \leq \tau^{k+1}$.
Then we have
$$
P(\tau^{k+1}-\tau^k > t \mid \tau^i, i=1,\ldots,k)= \exp\{-\int_0^t \lambda^{k+1}(\tau^k+s)ds\}
$$
Thus,
$$
P(\tau^{k+1}>t \mid \tau^i, i=1,\ldots,k)=\exp\{-\int_{\tau^k}^t \lambda^{k+1}(s)ds\},
$$
where $t \geq \tau^k$, which implies
\begin{equation}\label{a1}
f_{\tau^{k+1} \mid \tau^i, i=1,\ldots,k}(t)= a(1+\sum_{i=1}^k c e^{-d(t-\tau^i)})(n-k)\exp\{-\int_{\tau^k}^t a(1+\sum_{i=1}^k c e^{-d(s-\tau^i)})(n-k)ds\},
\end{equation}
where $f_{\tau^1}(t)=nae^{-nat}.$
One can apply (\ref{a1}) to derive the joint density function of $\tau^1, \tau^2, \ldots, \tau^{k+1}$ with the following recursion:
$$
f_{\tau^1, \tau^2, \ldots,\tau^{k+1}}(t_1,t_2, \ldots, t_{k+1})=f_{\tau^{k+1} \mid \tau^i =t_i, i=1,\ldots,k}(t_{k+1})f_{\tau^1, \tau^2, \ldots,\tau^{k}}(t_1,t_2, \ldots, t_{k}) \ ,
$$
where $t_1 < t_2 < \ldots < t_{k+1}$.
The unconditional density function of $\tau^k$ is given by the integral:
\begin{equation}
f_{\tau^k}(t)=
\displaystyle \int_0^{t} \int_0^{t_{k-1}} \cdots \int_0^{t_2}   f_{\tau^1, \tau^2, \ldots,\tau^{k-1},\tau^{k}}(t_1,t_2, \ldots,t_{k-1}, t)     dt_1 \cdots dt_{k-2} dt_{k-1}\\
\end{equation}
\begin{example}
If $n=2$, then the joint density function of $\tau^1$ and $\tau^2$ is:
$$
f_{\tau^1,\tau^2}(t_1,t_2)=
\left\{
\begin{array}{ll}
2a^2(1+ce^{-d(t_2-t_1)})\exp\{-a(t_1+t_2)+\frac{ac}{d}(e^{-d(t_2-t_1)}-1)\}, & t_1 \leq t_2\\
0, &t_1>t_2.
\end{array}
\right.
$$
The unconditional density function of $\tau^1$ is given by
$$
f_{\tau^1}(t)= 2a e^{-2at},
$$
and that of $\tau^2$ is given by
$$
f_{\tau^2}(t)= 2a^2 \int_0^t(1+ce^{-d(t-t_1)})\exp\{-a(t+t_1)+\frac{ac}{d}(e^{-d(t-t_1)}-1)\}dt_1.
$$
\end{example}
If we simplify our model by assuming that $d=0$, we have the following proposition.
\begin{proposition}
Suppose there are $n$ entities in our portfolio,
where the contagion intensity  process follows
\begin{equation}\label{inten1}
\lambda_i(t) = a \left(1  +  \sum_{j\neq i} c  1_{\{\tau_j \leq t\}} \right),
\quad i = 1, 2, \ldots, n \ .
\end{equation}
{ The $k$th default time $\tau^k$ is the sum of $k$ independent exponential random variables, i.e.,
$$
\tau^k=\sum_{i=0}^{k-1}X_i,
$$
where $X_i$ are independent exponential random variables with rates $a(1+ic)(n-i)$
respectively.
}
The the unconditional density function of $\tau^k$, k=1,2,\ldots,n,
satisfy the following recursive formula:
\begin{eqnarray}\label{p2}
\begin{array}{lll}
f_{\tau^{k+1}}(t) &=&
\displaystyle a(1+kc)(n-k) \int_0^t f_{\tau^k}(u) e^{-a(1+kc)(n-k)(t-u)} du,
\end{array}
\end{eqnarray}
where the initial condition is:
$$
f_{\tau^1}(t)=nae^{-nat}.
$$
\end{proposition}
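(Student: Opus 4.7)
The plan is to exploit the fact that, once $d=0$ is imposed in the intensity specification (\ref{p1}), the $(k+1)$th default rate $\lambda^{k+1}(t)=a(1+kc)(n-k)$ becomes a deterministic constant on the interval $(\tau^k,\tau^{k+1}]$, independent of the actual realizations of $\tau^1,\ldots,\tau^k$. This is the crucial simplification that makes the increments $X_k:=\tau^{k+1}-\tau^k$ independent exponentials.

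First, I would specialize equation (\ref{a1}) to $d=0$. Substituting $d=0$ collapses the exponential factors $e^{-d(t-\tau^i)}$ to $1$, so the conditional density of $\tau^{k+1}$ given $\tau^1,\ldots,\tau^k$ becomes
$$
f_{\tau^{k+1}\mid\tau^i,\,i=1,\ldots,k}(t)=a(1+kc)(n-k)\exp\!\bigl\{-a(1+kc)(n-k)(t-\tau^k)\bigr\},\quad t\geq\tau^k.
$$
Equivalently, the conditional distribution of the gap $X_k=\tau^{k+1}-\tau^k$ given $(\tau^1,\ldots,\tau^k)$ is $\mathrm{Exp}(a(1+kc)(n-k))$, and in particular depends on these conditioning variables only through the index $k$, not through their values.

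Second, I would conclude that $X_k$ is independent of the $\sigma$-algebra generated by $\tau^1,\ldots,\tau^k$, hence independent of $X_0,\ldots,X_{k-1}$ (since these are measurable functions of the ordered default times). Iterating this observation from $k=0$ upward, and using the base case $f_{\tau^1}(t)=nae^{-nat}$ (which follows from the fact that $\tau^1$ is the minimum of $n$ i.i.d.\ $\mathrm{Exp}(a)$ variables), gives the representation $\tau^k=\sum_{i=0}^{k-1}X_i$ with $X_i\sim\mathrm{Exp}(a(1+ic)(n-i))$ independent.

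Finally, the recursion (\ref{p2}) is immediate from $\tau^{k+1}=\tau^k+X_k$ with $X_k$ independent of $\tau^k$: the density of the sum is the convolution of the density of $\tau^k$ with the exponential density of $X_k$, which is exactly the displayed integral. No step here presents a serious obstacle — the only subtlety is verifying that the conditional law derived from (\ref{a1}) really is free of the values $\tau^1,\ldots,\tau^k$ when $d=0$, and I would make this explicit before invoking independence.
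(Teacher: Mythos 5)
Your proposal is correct and follows essentially the same route as the paper's proof: a constant hazard rate $a(1+kc)(n-k)$ between the $k$th and $(k+1)$th defaults, independence of the exponential gaps, and convolution to obtain the recursion. The only difference is that you explicitly justify the independence by noting that the conditional law from (\ref{a1}) at $d=0$ is free of the values $\tau^1,\ldots,\tau^k$, whereas the paper simply asserts it; this is a welcome bit of extra care, not a different argument.
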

\begin{proof}
We note that $\tau^{k+1}-\tau^{k}$ is independent of { $\tau^i$ for $i=0, 1,\ldots, k$},
where $\tau^0$ is assigned to be $0$.
Due to the homogeneous and symmetric properties of the entities in our portfolio,
we have
$$
\lambda^{k+1}(t) = a(1+kc)(n-k) \quad {\rm for} \quad \tau^k < t \leq \tau^{k+1}.
$$
Therefore
$$
P(\tau^{k+1}-\tau^k > t) = e^{-a(1+kc)(n-k)t}
$$
and this implies
$$
f_{\tau^{k+1}-\tau^{k}}(t)
= a(1+kc)(n-k)e^{-a(1+kc)(n-k)t}.
$$
{Let
$$X_i=\tau^{i+1}-\tau^i,$$
then
$$\tau^k=\sum_{i=0}^{k-1}X_i,$$
where $X_i$ are independent exponential random variables with rates $a(1+ic)(n-i)$.}
By convolution,
$$
\begin{array}{lll}
f_{\tau^{k+1}}(t) &=& \displaystyle  \int_0^t f_{\tau^k}(u)f_{\tau^{k+1}-\tau^{k}}(t-u)du\\
&=& \displaystyle a(1+kc)(n-k) \int_0^t f_{\tau^k}(u) e^{-a(1+kc)(n-k)(t-u)} du,
\end{array}
$$
for $k=1,2,\ldots,n-1$, where
$$
f_{\tau^1}(t)= f_{\tau^{1}-\tau^{0}}(t)=nae^{-nat}.
$$
\end{proof}

\begin{corollary}
Assume that $c \neq 1/i$ for $i=1,2,\ldots,n-1$. Then the
unconditional density function of $\tau^{k}$ is given by:
\begin{equation}
f_{\tau^k}(t) = \sum_{j=0}^{k-1}\alpha_{k,j}a e^{-\beta_ja t},
\end{equation}
where the coefficients are given by:
$$
\left\{
\begin{array}{l}
\alpha_{k+1,j}= \left\{
\begin{array}{lll}
\displaystyle \frac{\alpha_{k,j} \beta_k}{\beta_k-\beta_j},& j=0,1,\ldots,k-1\\
\displaystyle -\sum_{u=0}^{k-1}\frac{\alpha_{k,u} \beta_k}{\beta_k-\beta_u},& j=k
\end{array}
\right.\\
\displaystyle \beta_j = (n-j)(1+jc)
\end{array}
\right.
$$
and $\alpha_{1,0}=n$.
\end{corollary}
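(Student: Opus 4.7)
The plan is to prove the corollary by induction on $k$, using the convolution recursion
$f_{\tau^{k+1}}(t) = a\beta_k \int_0^t f_{\tau^k}(u)\, e^{-a\beta_k (t-u)}\, du$
from Proposition~1, where I am writing $\beta_k = (n-k)(1+kc)$ as the corollary does. The base case $k=1$ is immediate since $\beta_0 = n$ gives $nae^{-nat} = \alpha_{1,0} a e^{-\beta_0 a t}$ with $\alpha_{1,0}=n$.

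For the inductive step, assume $f_{\tau^k}(t) = \sum_{j=0}^{k-1} \alpha_{k,j}\, a\, e^{-\beta_j a t}$. Substituting into the recursion and pulling $e^{-\beta_k a t}$ outside, the integrand becomes a sum of exponentials $e^{(\beta_k - \beta_j) a u}$, which I can integrate termwise. The result splits into a sum of $k$ exponentials $e^{-\beta_j a t}$ for $j=0,\ldots,k-1$ coming from the $e^{(\beta_k-\beta_j)at}$ piece, plus a single term $e^{-\beta_k a t}$ coming from the $-1$ piece. Matching coefficients gives exactly $\alpha_{k+1,j} = \alpha_{k,j}\beta_k/(\beta_k-\beta_j)$ for $j<k$, and $\alpha_{k+1,k} = -\sum_{u=0}^{k-1}\alpha_{k,u}\beta_k/(\beta_k-\beta_u)$, as claimed.

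The one thing that needs care, and that I expect to be the main (modest) obstacle, is justifying that the division by $\beta_k - \beta_j$ is legitimate, i.e., that $\beta_k \neq \beta_j$ whenever $0 \leq j < k \leq n-1$. A short algebraic simplification yields
\[
\beta_k - \beta_j \;=\; (k-j)\bigl[(n-k-j)c - 1\bigr],
\]
so $\beta_k = \beta_j$ can occur only when $(n-k-j)c = 1$. For $c \geq 0$ this forces $n-k-j \geq 1$, i.e., $n-k-j \in \{1,2,\ldots,n-1\}$, and the hypothesis $c \neq 1/i$ for $i=1,\ldots,n-1$ rules out every such case. This verifies the inductive step and hence the corollary; the rest is bookkeeping of the coefficients.
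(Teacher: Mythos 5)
Your proof is correct and follows essentially the same route the paper intends: iterating the convolution recursion of Proposition~1, integrating the exponentials termwise, and matching coefficients. The paper merely states that applying formula (4) iteratively gives the corollary, so your explicit verification that $\beta_k-\beta_j=(k-j)\bigl[(n-k-j)c-1\bigr]\neq 0$ under the hypothesis $c\neq 1/i$ is a welcome detail the paper leaves implicit.
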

Applying the recursive formula (4) iteratively gives
the above corollary
in which the analytic expression are given with recursive formulas for the coefficients.
From the view point of computational convenience, we can see the advantage of the recursive formulas for the coefficients.
The same result is also obtained in Zheng \& Jiang (2009), with
$$
\alpha_{k,j}=\frac{(-1)^{k-1-j} n! (\prod_{m=1}^{k-1}(1+mc))}{(n-k)! j! (k-1-j)! (\prod_{m=0,m\neq j}^{k-1}(1+(m+j-n)c))}.
$$

We remark that the method to derive the recursive formula of $\tau^k$ stated here
is related to the total hazard construction method adopted
by Yu (2007) and Zheng \& Jiang (2009).
Assume we first enter the market right after the $k$th default
of the $n$ entities. Then $\tau^{k+1}-\tau^k$ is the first default time
being observed.
We draw a collection of independent standard exponential random variables
$E_1, E_2,\ldots, E_{n-k}$.
By using the total hazard construction method, we have
$$
\tau^{k+1}-\tau^k = \min \left\{\frac{E_1}{a(1+kc)},\frac{E_2}{a(1+kc)},\ldots,\frac{E_{n-k}}{a(1+kc)}\right\},
$$
which implies that
$$
P(\tau^{k+1}-\tau^k > t) = \prod_{i=1}^{n-k}P(E_i > a(1+kc)t) = e^{-a(1+kc)(n-k)t} \ .
$$
{ We note that the computational cost for 
the density of $\tau^k$ can grow up quickly, when $k$ is getting large.
The following propositions demonstrate the behavior of $\tau^k$ when $k$ and $n$ are large.
In Propositions 2 and 3, 
we temporarily define $\tau^k(n)$ as $k$th default time in a portfolio of $n$ names.
\begin{proposition}
For a fixed $k$, when $n \rightarrow \infty$, $\tau^k(n) \rightarrow 0$ almost surely.
\end{proposition}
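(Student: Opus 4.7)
The plan is to exploit the representation supplied by Proposition 1, which writes $\tau^k(n)$ as a sum of $k$ independent exponential random variables with rates that grow linearly in $n$. Concretely, Proposition 1 gives
$$\tau^k(n)=\sum_{i=0}^{k-1}X_i^{(n)},\qquad X_i^{(n)}\sim\mathrm{Exp}\!\left(a(1+ic)(n-i)\right),$$
so each individual waiting time has rate tending to $\infty$ as $n\to\infty$. Since only $k$ (a fixed number) of such terms are summed, one expects the whole sum to collapse to $0$.

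To make the almost-sure statement meaningful I would first couple the $\tau^k(n)$ on a single probability space via the total hazard construction already invoked in the paper. Let $E_0,E_1,\ldots,E_{k-1}$ be i.i.d.\ standard exponential random variables fixed once and for all, and define
$$\tau^k(n):=\sum_{i=0}^{k-1}\frac{E_i}{a(1+ic)(n-i)}.$$
By Proposition 1 this has the correct marginal law for every $n$, and the key feature is that the exponentials $E_i$ do not depend on $n$. For each fixed $i\in\{0,1,\ldots,k-1\}$, $E_i$ is a.s.\ finite and the denominator $a(1+ic)(n-i)\to\infty$, so the $i$th term tends to $0$ almost surely. Being a finite sum of such terms, $\tau^k(n)\to 0$ a.s.

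If one prefers not to rely on a specific coupling, the same conclusion (in probability, and then a.s.\ along any sequence by a Borel--Cantelli step) follows from a crude moment estimate: $E[\tau^k(n)]=\sum_{i=0}^{k-1}1/[a(1+ic)(n-i)]=O(1/n)$ and $\mathrm{Var}(\tau^k(n))=O(1/n^2)$, so Markov's inequality gives $P(\tau^k(n)>\varepsilon)=O(1/n^2)$ and Borel--Cantelli yields the a.s.\ convergence.

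There is no genuine obstacle here; the only subtle point is the bookkeeping one of specifying on which probability space the sequence $(\tau^k(n))_{n\geq k}$ lives so that ``almost sure'' has content. Once the total hazard coupling is fixed, the argument is essentially a one-line consequence of Proposition 1.
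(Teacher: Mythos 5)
Your proposal is correct, and in fact contains two arguments. Your fallback argument (second-moment bound, Markov/Chebyshev inequality, Borel--Cantelli) is precisely the paper's proof: the authors bound $E[\tau^k(n)]<k/(a(n-k))$ and $\mathrm{Var}[\tau^k(n)]<k/(a^2(n-k)^2)$ using the decomposition of Proposition 1, note that $\sum_{n>k}E[\tau^k(n)^2]<\infty$, and conclude by Borel--Cantelli. Your primary argument is genuinely different and arguably cleaner: by realizing every $\tau^k(n)$ from a single family of standard exponentials $E_0,\ldots,E_{k-1}$ via the total hazard construction, you reduce the claim to the trivial fact that a fixed finite sum of terms $E_i/\bigl(a(1+ic)(n-i)\bigr)$ tends to $0$ pointwise as $n\to\infty$. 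This dispenses with all moment computations and, more importantly, confronts a point the paper leaves implicit: the assertion ``$\tau^k(n)\to 0$ almost surely'' only has content once the whole sequence $(\tau^k(n))_{n>k}$ is realized on one probability space, and the paper never specifies that joint law. (The first Borel--Cantelli lemma requires no independence, so the paper's estimate works under any joint realization, but the issue deserves the explicit mention you give it.) The mild trade-off is that your coupled conclusion is a statement about one particular joint construction; for a coupling-free statement one reverts to your second argument, which is the paper's own.
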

\begin{proof}
For any given $\epsilon > 0$,
by Markov's inequality,
$$
P(\tau^k(n) \geq \epsilon) \leq \frac{E[\tau^k(n)]^2}{\epsilon^2}
$$
From Proposition 1,
$$
E[\tau^k(n)]=\sum_{i=0}^{k-1}E[X_i]=\sum_{i=0}^{k-1}\frac{1}{a(1+ic)(n-i)}<\frac{k}{a(n-k)},
$$
$$
Var[\tau^k(n)]=\sum_{i=0}^{k-1}Var[X_i]=\sum_{i=0}^{k-1}\frac{1}{(a(1+ic)(n-i))^2}<\frac{k}{a^2(n-k)^2}.
$$
Since
$$
\sum_{n=k+1}^{\infty} E[\tau^k(n)^2]<\sum_{n=k+1}^{\infty} \frac{k^2}{a^2(n-k)^2}+\sum_{n=k+1}^{\infty} \frac{k}{a^2(n-k)^2} < \infty,
$$
then from Borel-Cantelli Lemma, we have with probability 1, 
for all large $n$, $\tau^k(n) < \epsilon$. 
Hence $\tau^k(n) \rightarrow 0$ almost surely.
\end{proof}
}
{
\begin{proposition}
Let $k \rightarrow \infty$(due to $n > k$, in this case, $n \rightarrow \infty$), $\tau^k(n) \rightarrow 0$ almost surely and in particular,
$\tau^n(n) \rightarrow 0$ almost surely as $n \rightarrow \infty$.
\end{proposition}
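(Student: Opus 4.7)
I would reduce the general statement to the ``in particular'' one by a monotonicity argument and then rerun the second-moment Markov / Borel--Cantelli scheme of Proposition~2, but with sharper moment bounds tailored to the case $k=n$.

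First, by Proposition~1 we have $\tau^k(n) = \sum_{i=0}^{k-1} X_i$ with nonnegative summands, so $k \mapsto \tau^k(n)$ is nondecreasing. In particular $\tau^k(n) \leq \tau^n(n)$ whenever $k \leq n$, which reduces the claim to showing $\tau^n(n) \to 0$ almost surely as $n \to \infty$. For this I would use second-moment Markov as in Proposition~2,
\[
P(\tau^n(n) \geq \epsilon) \;\leq\; \frac{E[\tau^n(n)^2]}{\epsilon^2} \;=\; \frac{(E[\tau^n(n)])^2 + \mathrm{Var}[\tau^n(n)]}{\epsilon^2},
\]
and aim to prove that the right-hand side is summable in $n$, so that Borel--Cantelli delivers the conclusion.

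The moment bounds of Proposition~2 blow up at $k=n$ because $n-i$ now ranges down to $1$, so a sharper estimate is needed. The key device is the partial-fraction identity
\[
\frac{1}{(1+ic)(n-i)} \;=\; \frac{1}{cn+1}\left(\frac{c}{1+ic} + \frac{1}{n-i}\right),
\]
which (assuming $c>0$) rewrites
\[
E[\tau^n(n)] \;=\; \frac{1}{a}\sum_{i=0}^{n-1}\frac{1}{(1+ic)(n-i)}
\]
as $(a(cn+1))^{-1}$ times a pair of harmonic-type sums, each of order $\ln n$; this gives $E[\tau^n(n)] = O(\ln n / n)$. For the variance $a^{-2}\sum_{i=0}^{n-1}(1+ic)^{-2}(n-i)^{-2}$, I would split the sum at $i = \lfloor n/2\rfloor$: on the upper half $(1+ic)^{-2} = O(1/n^2)$ and on the lower half $(n-i)^{-2} = O(1/n^2)$, so bounding the remaining factor against $\sum_j 1/j^2$ in each case yields $\mathrm{Var}[\tau^n(n)] = O(1/n^2)$. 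Hence $E[\tau^n(n)^2] = O((\ln n)^2/n^2)$, which is summable.

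The main obstacle is precisely the mean bound: the naive estimate $E[\tau^n(n)] \leq H_n / a$ diverges, and the Proposition~2 bound $k/(a(n-k))$ is useless at $k=n$, so one really needs the partial-fraction cancellation to gain the extra $1/n$ factor required for squared-summability. It is worth noting that this strategy tacitly relies on $c>0$: for $c=0$ the intensities collapse to $a$, the $\tau_i$ are i.i.d.\ exponentials, and $\tau^n(n)$ grows like $\ln n / a$ rather than tending to zero.
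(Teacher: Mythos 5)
Your proposal is correct and follows essentially the same route as the paper: the same partial-fraction identity gives $E[\tau^k(n)] = O(\ln k/k)$ and the variance is $O(1/k^2)$, after which second-moment Markov plus Borel--Cantelli finishes the argument (the paper achieves your reduction to the worst case $n=k$ by bounding $1/(n-i)$ by $1/(k-i)$ inside the moment sums rather than by monotonicity of $k\mapsto\tau^k(n)$, and it bounds the variance by squaring the partial-fraction estimate rather than splitting the sum at $n/2$; both are cosmetic differences). Your closing remark that the conclusion genuinely requires $c>0$ is a worthwhile observation the paper does not make, since $c$ is only assumed nonnegative in (2) and for $c=0$ one has $\tau^n(n)\sim \ln n/a \to \infty$.
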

\begin{proof}
From Proposition 1,
$$
E[\tau^k(n)]=\sum_{i=0}^{k-1}E[X_i]=\sum_{i=0}^{k-1}\frac{1}{a(1+ic)(n-i)}<\sum_{i=0}^{k-1}\frac{1}{a(1+ic)(k-i)}<\frac{2c+2}{a(ck+1)}(1+\frac{\ln (1+ck)}{c}).
$$
Indeed,
$$
\begin{array}{lll}
\displaystyle \sum_{i=0}^{k-1}\frac{1}{a(1+ic)(k-i)} &=& \displaystyle \sum_{i=0}^{k-1}\frac{1}{a(ck+1)}[\frac{1}{k-i}+\frac{c}{1+ci}]\\
&=& \displaystyle \sum_{i=0}^{k-1}\frac{1}{a(ck+1)}[\frac{1}{i+1}+\frac{c}{1+ci}] \\
 &<& \displaystyle \frac{2c+2}{a(ck+1)}\sum_{i=0}^{k-1}\frac{1}{1+ci}\\
&<&\displaystyle \frac{2c+2}{a(ck+1)}(1+\int_0^k \frac{1}{1+cx} dx)\\
&=&\displaystyle \frac{2c+2}{a(ck+1)}(1+\frac{\ln ( 1+ck)}{c}).
\end{array}
$$
Similarly,
$$
Var[\tau^k(n)]=\sum_{i=0}^{k-1}Var[X_i]<\sum_{i=0}^{k-1}\frac{1}{(a(1+ic)(k-i))^2}< \frac{(2c+2)^2}{a^2(ck+1)^2}\sum_{i=0}^{\infty}\frac{1}{(1+ic)^2}.
$$
Since
$$
\sum_{k=1}^{\infty}(E[\tau^k(n)])^2<\sum_{k=1}^{\infty}\frac{(2c+2)^2}{a^2(ck+1)^2}(1+\frac{\ln (1+ck)}{c})^2<\infty,
$$
and
$$
\sum_{k=1}^{\infty}Var[\tau^k(n)]<\sum_{k=1}^{\infty}\frac{(2c+2)^2}{a^2(ck+1)^2}\sum_{i=0}^{\infty}\frac{1}{(1+ic)^2}<\infty,
$$
we have
$$
\sum_{k=1}^{\infty} E[\tau^k(n)^2]<\infty.
$$
Using the same argument in the proof of Proposition 2,
we can deduce that
$\tau^k(n) \rightarrow 0$ almost surely when $k \rightarrow \infty$.
\end{proof}
}

{The above two propositions give us some insight about the infectious contagion,
when the portfolio size becomes large, 
with entities inside the portfolio being associated by infection, the contagion becomes more intensive.}

\subsection{Stochastic Intensity}

In this subsection, we extend the homogeneous contagion intensity process to the case that the constant intensity rate $a$ is replaced by an ``exogenous'' stochastic process $X$, i.e.,
\begin{equation}\label{p3}
\lambda_i(t) = X(t)\left(1+ \sum_{j=1,j \neq i}^{n} c 1_{\{ \tau_j \leq t\}} \right).
\end{equation}

\begin{proposition}
Suppose there are $n$ entities in our portfolio,
where the contagion stochastic intensity process follows (\ref{p3}).
Then the unconditional density functions of $\tau^k$, $k = 1, 2, \ldots, n$,
given the realization of $(X(s))_{0 \leq s<\infty}$, satisfy
the following recursive formula:
\begin{equation}\label{p4}
\begin{array}{l}
f_{\tau^{k+1}\mid (X(s))_{0 \leq s < \infty}}(t)
= (n-k)(1+kc) X(t) \int_0^t f_{\tau^k \mid (X(s))_{0 \leq s < \infty}}(u)e^{-(n-k)(1+kc)\int_u^{t} X(s)ds}du,
\end{array}
\end{equation}
for $k=1,2,\ldots,n-1$, where
$$
f_{\tau^1 \mid (X(s))_{0 \leq s < \infty}}(t)= nX(t)e^{-n \int_0^t X(u)du}.
$$
\end{proposition}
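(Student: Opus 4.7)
The plan is to mirror the argument of Proposition 1 path by path by conditioning on the realization of the exogenous state process $X$. Because, by the standing assumption in Section 2, each $\lambda_i$ is $\{\bigvee_i \mathcal{F}_t^i\}_{t \geq 0}$-predictable conditional on $\mathcal{F}_\infty^X$, I may treat $X(\cdot)$ as a deterministic, time-inhomogeneous rate whenever I compute conditional distributions given the whole path of $X$.

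First, I would identify the conditional intensity that triggers each ordered default time. Before the first default, each of the $n$ entities carries intensity $X(t)$, so the first-default intensity is $nX(t)$; the corresponding survival function is $\exp\{-n\int_0^t X(u)du\}$, which yields the stated initial density. Next, conditional on $\tau^k=u$ and the path of $X$, each of the $n-k$ surviving entities has intensity $X(t)(1+kc)$ for $t>u$ by the homogeneity and symmetry of (\ref{p3}), so that $\tau^{k+1}$ arrives with cumulative hazard $(n-k)(1+kc)\int_u^t X(s)\,ds$. Its conditional density given $\tau^k=u$ and the path of $X$ is therefore
$$
(n-k)(1+kc)X(t)\exp\left\{-(n-k)(1+kc)\int_u^t X(s)\,ds\right\}, \quad t>u.
$$

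Second, I would apply the elementary joint-density factorization
$$
f_{\tau^{k+1}\mid (X(s))_{0\le s<\infty}}(t) = \int_0^t f_{\tau^k\mid (X(s))_{0\le s<\infty}}(u)\,f_{\tau^{k+1}\mid \tau^k=u,\,(X(s))_{0\le s<\infty}}(t)\,du,
$$
which is valid here because, under (\ref{p3}) with no exponential decay, the post-$\tau^k$ intensity depends on $(\tau^1,\ldots,\tau^k)$ only through the count $k$ and through the starting time $u=\tau^k$. Substituting the conditional density derived above and pulling the $t$-dependent factor $(n-k)(1+kc)X(t)$ outside the integral produces exactly the recursion in (\ref{p4}).

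The main obstacle — in contrast to Proposition 1 — is that one cannot invoke a direct convolution identity, since given $X$ the increment $\tau^{k+1}-\tau^k$ is no longer independent of $\tau^k$: its conditional law depends on the path of $X$ \emph{after} $\tau^k$. The remedy is to work with the joint density factorization above, and to exploit the fact that the absence of exponential decay in (\ref{p3}) still preserves the Markovian reduction ``only $\tau^k$ matters'' that made the convolution step in Proposition 1 go through.
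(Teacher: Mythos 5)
Your proposal is correct and follows essentially the same route as the paper's own proof: both condition on the full path of $X$, identify the post-$\tau^k$ intensity $(n-k)(1+kc)X(t)$, and replace the convolution of Proposition 1 by the mixture (disintegration) integral over the value $u$ of $\tau^k$, precisely because the conditional law of $\tau^{k+1}-\tau^k$ now depends on $\tau^k$ through the path of $X$ after $u$. Your explicit justification that the factorization is valid because the post-$\tau^k$ intensity depends on the history only through $k$ and $u$ is a point the paper leaves implicit, but the substance of the argument is identical.
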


\begin{proof}
We note that in this case,
$\tau^{k+1}-\tau^k$ depends on the $k$th default time $\tau^k$,
in the way that the $(k+1)$th default rate $\lambda^{k+1}(t)$ follows:
$$
\lambda^{k+1}(t) = X(t)(1+kc)(n-k) \quad {\rm for} \quad \tau^k < t \leq \tau^{k+1}.
$$
Then we have the following key relationship
$$
\begin{array}{lll}
P(\tau^{k+1}-\tau^k>t \mid \tau^k, (X(s))_{0 \leq s < \infty})
&=& e^{-\int_0^t \lambda^{k+1}(\tau^k+s) ds}\\
&=& e^{-(1+kc)(n-k)\int_0^t X(\tau^k+s)ds}.
\end{array}
$$
Therefore, the density function of $\tau^{k+1}-\tau^{k}$
given the realization of $\tau^k$ and $(X(s))_{0 \leq s < \infty}$ is
$$
\begin{array}{lll}
f_{\tau^{k+1}-\tau^k \mid \tau^k, (X(s))_{0 \leq s < \infty}}(t)
&=& \displaystyle -\frac{d P(\tau^{k+1}-\tau^k>t \mid \tau^k, (X(s))_{0 \leq s< \infty})}{dt}\\
&=& \displaystyle (n-k)(1+kc)X(\tau^k+t)e^{-(n-k)(1+kc)\int_0^t X(\tau^k +s)ds}.
\end{array}
$$
Thus the density function of $\tau^{k+1}$
given the realization of $(X(s))_{0 \leq s<\infty}$, is given by
\begin{equation}\label{p4}
\begin{array}{lllll}
&&f_{\tau^{k+1}\mid (X(s))_{0 \leq s < \infty}}(t)\\
&=& \int_0^t f_{\tau^k \mid (X(s))_{0 \leq s < \infty}}(u)(n-k)(1+kc) X(t) e^{-(n-k)(1+kc)\int_0^{t-u} X(u+s)ds} du\\
&=& (n-k)(1+kc) X(t)\int_0^t f_{\tau^k \mid (X(s))_{0 \leq s < \infty}}(u)e^{-(n-k)(1+kc)\int_u^{t} X(s)ds}du,
\end{array}
\end{equation}
for $k=1,2,\ldots,n-1$, where
$$
f_{\tau^1 \mid (X(s))_{0 \leq s < \infty}}(t)
=f_{\tau^1-\tau^0 \mid (X(s))_{0 \leq s < \infty} }(t)=nX(t)e^{-n \int_0^t X(u)du}.
$$
\end{proof}

Again, applying the recursive formula (\ref{p4}) iteratively
gives the following corollary, which was
obtained in Zheng \& Jiang (2009).

\begin{corollary}
Assume that $c \neq 1/i$ for $i=1,2,\ldots,n-1$. Then the density function of $\tau^{k}$,
given $\mathcal{F}_{\infty}^X$, is given by
\begin{equation}
f_{\tau^k \mid (X(s))_{0 \leq s < \infty}}(t) = \sum_{j=0}^{k-1}\alpha_{k,j} X(t) e^{-\beta_j \int_0^t X(u)du},
\end{equation}
where $\alpha_{k,j}$ and $\beta_{j}$ are given in Corollary 1.
\end{corollary}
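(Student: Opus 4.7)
The plan is to prove the formula by induction on $k$, treating the sample path $(X(s))_{s\geq 0}$ as frozen throughout and writing $Y(t) := \int_0^t X(s)\,ds$ for the cumulative intensity. The base case $k=1$ is immediate: the stated initial condition $f_{\tau^1\mid(X(s))_{0\leq s<\infty}}(t) = nX(t)e^{-nY(t)}$ is already of the claimed form with $\alpha_{1,0}=n$ and $\beta_0 = n = (n-0)(1+0\cdot c)$, in agreement with the initialization in Corollary 1.

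For the inductive step I would substitute the inductive hypothesis into the recursion (\ref{p4}). Since the prefactor $(n-k)(1+kc)$ is exactly $\beta_k$, the right-hand side becomes
$$
\beta_k X(t) e^{-\beta_k Y(t)} \sum_{j=0}^{k-1} \alpha_{k,j} \int_0^t X(u)\, e^{(\beta_k-\beta_j) Y(u)}\, du.
$$
The key observation is that $X(u)\,du = dY(u)$, so for each $j$ the inner integral is a perfect derivative and evaluates in closed form to $\bigl(e^{(\beta_k-\beta_j)Y(t)}-1\bigr)/(\beta_k-\beta_j)$, provided $\beta_k\neq \beta_j$. Multiplying back through by the prefactor and regrouping, the terms proportional to $e^{-\beta_j Y(t)}$ for $j=0,\ldots,k-1$ carry coefficient $\alpha_{k,j}\beta_k/(\beta_k-\beta_j)$, while the residual terms proportional to $e^{-\beta_k Y(t)}$ accumulate into a single new coefficient $-\sum_{j=0}^{k-1}\alpha_{k,j}\beta_k/(\beta_k-\beta_j)$. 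These are exactly the recursions defining $\alpha_{k+1,j}$ in Corollary 1, which closes the induction.

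The only genuinely non-routine point is verifying $\beta_k \neq \beta_j$ for all admissible indices, so that the elementary integration above is legitimate. Expanding $\beta_j = (n-j)(1+jc)$ and simplifying, the identity $\beta_j=\beta_k$ reduces to $(n-j-k)c=1$. Since $c\geq 0$, this has no solution when $n-j-k\leq 0$; and when $n-j-k\in\{1,\ldots,n-1\}$, the hypothesis $c\neq 1/i$ for $i=1,\ldots,n-1$ rules out the remaining cases. Thus the non-degeneracy holds throughout the induction, and the main obstacle — ensuring that the path-conditional argument of Corollary 1 lifts cleanly to the stochastic-intensity setting — reduces to this simple check, with the role of calendar time $t$ in the deterministic case replaced by the random cumulative intensity $Y(t)$.
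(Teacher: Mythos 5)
Your proof is correct and follows essentially the same route as the paper, which simply asserts that iterating the recursion of Proposition 4 yields the corollary; you have filled in the induction, the change of variable $dY(u)=X(u)\,du$ that makes each integral elementary, and the check that the hypothesis $c\neq 1/i$ guarantees $\beta_j\neq\beta_k$. Nothing further is needed.
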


\section{Multi-state Stochastic Intensity Process}

In this section, we consider the stochastic intensity process to be a multi-state Markov process.
Here the state space $S=\{1,2,\ldots,N\}$
represents the set of all exogenous states.
For simplicity, we reduce the number of states to two.
Then the stochastic intensity process $X(t)$
alternates between $x_1$ and $x_2$, so that
$$
X(t)=\left\{
\begin{array}{ll}
x_1, \quad {\rm when \ the\  exogenous\  state \ lies\ in \ 1}\\
x_2, \quad {\rm when \ the \ exogenous \ state \ lies\  in \ 2}.
\end{array}\right.
$$
Let $\eta_i$ denote the rate of leaving state $i$ and $\pi_i$
the random time to leave state $i$, where $\pi_i$ is an exponential random variable, i.e.,
$$
P(\pi_i > t) = e^{-\eta_i t}, \ \ i=1,2.
$$
In this case, we consider a two-state, Markovian regime-switching,
intensity-based model for portfolio default risk.
Guo (1999) considers a two-state, Markovian regime-switching,
model for option valuation.
Here we follow her idea to derive the unconditional density function of $\tau^k$.
Let $T_i(t)$ be the total time between $0$ and $t$ during which
$X(s)=x_1$, starting from $X(0)=x_i$.
We then draw exponential random variables $\xi_1$ with intensity $\eta_1$ and $\xi_2$ with intensity $\eta_2$ independent of $X$.
Therefore
\begin{equation}\label{e2}
T_1(t)\hat{=} \left\{
\begin{array}{cll}
\xi_1 + T_2(t-\xi_1),& \xi_1 \leq t\\
t,& \xi_1 > t
\end{array}\right.,
\quad
T_2(t)\hat{=} \left\{
\begin{array}{cll}
T_1(t-\xi_2),&\xi_2 \leq t\\
0,& \xi_2 > t
\end{array}\right.,
\end{equation}
where ``$\hat{=}$'' means ``equals in distribution''.
Let
$$
\psi_i(l,t) =  E(e^{-l T_i(t)}) ,\  i=1,2,\  l \in \mathcal{R}.
$$

By Equation (\ref{e2}), we have
$$
\left\{
\begin{array}{l}
\psi_1(l,t) = \int_0^t \eta_1 e^{-(\eta_1+l)u} \psi_2 (l,t-u) du + e^{-(\eta_1+l)t},\\
\psi_2(l,t) = \int_0^t \eta_2 e^{-\eta_2 u} \psi_1 (l,t-u) du + e^{-\eta_2 t}.
\end{array}
\right.
$$
Taking the Laplace transform on both sides gives:
$$
\left\{
\begin{array}{l}
\mathcal{L}(\psi_1(l,t))(\cdot,s)=\frac{1}{l+s+\eta_1} + \frac{\eta_1}{l+s+\eta_1}\mathcal{L}(\psi_2(l,t))(\cdot,s),\\
\mathcal{L}(\psi_2(l,t))(\cdot,s)=\frac{1}{s+\eta_2} + \frac{\eta_2}{s+\eta_2}\mathcal{L}(\psi_1(l,t))(\cdot,s).
\end{array}
\right.
$$
Therefore,
$$
\mathcal{L}(\psi_1(l,t))(\cdot,s) = \frac{s+\eta_1+\eta_2}{s^2+ s\eta_1 +s\eta_2+ s l + l\eta_2}.
$$
By taking the inverse Laplace transform of the above equation, we have
$$
\psi_1(l,t) = \left\{
\begin{array}{ll}
e^{-\alpha t}[ \cos(\sqrt{\omega} t) + \frac{\beta}{\sqrt{\omega}} \sin(\sqrt{\omega} t)],& \omega > 0\\
e^{-\alpha t}(1+\beta t), & \omega=0\\
e^{-\alpha t}[ \cosh(\sqrt{-\omega} t) + \frac{\beta}{\sqrt{-\omega}} \sinh(\sqrt{-\omega} t)],&\omega<0,
\end{array}\right.
$$
where
$$
\alpha = \frac{\eta_1+\eta_2 +l}{2},\  \beta=\frac{\eta_1+\eta_2 -l}{2}, \ \omega = l\eta_2-\alpha^2.
$$
Similarly we have
$$
\mathcal{L}(\psi_2(l,t))(\cdot,s) = \frac{s+\eta_1+\eta_2 +l}{s^2+ s\eta_1 +s\eta_2+s l+l\eta_2},
$$
and
$$
\psi_2(l,t) = \left\{
\begin{array}{ll}
e^{-\alpha t}[ \cos(\sqrt{\omega} t) + \frac{\alpha}{\sqrt{\omega}} \sin(\sqrt{\omega} t)],& \omega > 0\\
e^{-\alpha t}(1+\alpha t), & \omega=0\\
e^{-\alpha t}[ \cosh(\sqrt{-\omega} t) + \frac{\alpha}{\sqrt{-\omega}} \sinh(\sqrt{-\omega} t)],& \omega<0.
\end{array}\right.
$$
We proceed with the derivation of the unconditional density function of $\tau^k$.
Given $X(0)=x_i$,
$$
E(e^{-L\int_0^t X(s) ds})=E(e^{-L(x_1T_i(t)+x_2(t-T_i(t)))})
=e^{-L x_2t}{\psi_i(L(x_1-x_2),t)} ,\ L>0\ .
$$
Then
$$
E(X(t)e^{-L\int_0^t X(s) ds})=-\frac{1}{L}{d(e^{-L x_2t}\psi_i(L(x_1-x_2),t)) \over dt}.
$$
Combining the results in Corollary 2,
the unconditional density function of $\tau^k$ can be obtained: when $X(0)=x_i$,
\begin{equation}
f_{\tau^k}(t) = -\sum_{j=0}^{k-1}\frac{\alpha_{k,j}}{\beta_j} {d(e^{-\beta_j x_2t}\psi_i(\beta_j(x_1-x_2),t))\over dt}.
\end{equation}

\section{The Heterogeneous Case}

In this section, we present our method in obtaining the
unconditional distribution of the $k$th default time in the heterogeneous case of
the interacting intensity default process.
For simplicity of discussion, we consider a two-group case.
The first group($G_1$) consists $n_1$ obligors and $\lambda_i(t)$ denotes the default rate of name $i$ in $G_1$
at time $t$, while the second group($G_2$) consists $n_2$ obligors and $\tilde{\lambda}_i(t)$ denotes the default rate of name $i$ in $G_2$ at time $t$. The interacting intensity process of the two-group case is assumed as follows:
\begin{equation}\label{p5}
\left\{
\begin{array}{lll}
\displaystyle \lambda_i(t)=a\left(1+b\sum_{j \neq i} 1_{\{\tau_j \leq t\}}+c \sum_{j} 1_{\{\tilde{\tau}_j \leq t\}}\right),\\
\displaystyle \tilde{\lambda}_i(t) = \tilde{a}\left(1+\tilde{b}\sum_j 1_{\{\tau_j \leq t\}}+\tilde{c} \sum_{j \neq i} 1_{\{\tilde{\tau}_j \leq t\}}\right),
 \end{array}
\right.
\end{equation}
where $\tau_j$ and $\tilde{\tau}_j$ denote the default time of name $j$ in $G_1$ and $G_2$, respectively, $a,\tilde a$ are positive constants and $b,c,\tilde b, \tilde c$ are nonnegative constants.
Let $N^i$ be the number of defaults in $G_1$ right after the $i$th default of our portfolio, where $N^0$ is assigned to be $0$.

\begin{proposition}
Suppose our portfolio has two groups of entities $G_1$ and $G_2$,
where $G_1$ and $G_2$ consist of $n_1$ and $n_2$ obligors, respectively, and
$n = n_1 + n_2$. For each $t \ge 0$, let $\lambda_i(t)$ and $\tilde{\lambda}_i(t)$
denote the default rates of name $i$ in
$G_1$ and $G_2$ at time $t$, respectively. These default rates
follow (\ref {p5}). For each $i = 1, 2, \cdots, n$, let
$N^i$ denote the number of defaults in $G_1$ right after
the $i$th default of our portfolio.
Then the recursive formula of the joint distribution of $\tau^k$ and $N^k$ is given by:
\begin{equation}\label{p6}
\begin{array}{lll}
f_{\tau^{k+1}, N^{k+1}}(t,m+1)
&=&\displaystyle {\tilde{\zeta}_{k,m+1}}\int_0^t f_{\tau^k,N^k}(u,m+1) e^{- (\zeta_{k,m+1}+\tilde{\zeta}_{k,m+1})(t-u)}du\\
& &\displaystyle + {\zeta}_{k,m} \int_0^t f_{\tau^k,N^k}(u,m)e^{- ({\zeta_{k,m}+\tilde{\zeta}_{k,m}})(t-u)}du,
\end{array}
\end{equation}
where
$$
\left\{
\begin{array}{l}
f_{\tau^{k},N^k}(t,m) = \displaystyle  -\frac{d P(\tau^k >t, N^k=m)}{d t},\\
\zeta_{k,m}=a(n_1-m)[1+bm+c(k-m)],\\
\tilde{\zeta}_{k,m}=\tilde{a}(n_2-(k-m))[1+\tilde{b}m+\tilde{c}(k-m)],
\end{array}
\right.
$$
and
$$
\left\{
\begin{array}{l}
f_{\tau^1,N^1}(t,1)=n_1 a e^{-(n_1a+n_2\tilde{a})t}, \\
f_{\tau^1,N^1}(t,0)=n_2 \tilde{a} e^{-(n_1a+n_2\tilde{a})t},\\
f_{\tau^1,N^1}(t,m)=0, \quad m \neq 0,1.
\end{array}
\right.
$$
\end{proposition}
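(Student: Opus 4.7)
The plan is to mirror the argument used for Proposition 1, but to carry along the extra label $N^k$ since now the intensities depend on how many previous defaults fell in each group. The key structural observation is that in the heterogeneous specification (\ref{p5}) the intensities depend on past defaults only through the counts $N^k$ and $k-N^k$, and within each group all surviving names have the same intensity. Hence once $(\tau^k, N^k) = (u,m)$ is given, the surviving population consists of $n_1 - m$ names in $G_1$ each carrying intensity $a[1+bm+c(k-m)]$ and $n_2 - (k-m)$ names in $G_2$ each carrying intensity $\tilde{a}[1+\tilde{b}m+\tilde{c}(k-m)]$. Summing yields total intensity $\zeta_{k,m}+\tilde{\zeta}_{k,m}$ for the next event, so $\tau^{k+1}-\tau^k$ is conditionally exponential with that rate, and by the standard competing-risks (minimum-of-independent-exponentials) argument the next defaulter lies in $G_1$ with probability $\zeta_{k,m}/(\zeta_{k,m}+\tilde{\zeta}_{k,m})$ and in $G_2$ with the complementary probability.

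First I would make this explicit: given $(\tau^k, N^k) = (u,m)$, the joint density of the waiting time $s = \tau^{k+1}-\tau^k$ and the group of the $(k+1)$-th defaulter equals $\zeta_{k,m} e^{-(\zeta_{k,m}+\tilde{\zeta}_{k,m})s}$ if that defaulter lies in $G_1$ and $\tilde{\zeta}_{k,m} e^{-(\zeta_{k,m}+\tilde{\zeta}_{k,m})s}$ if it lies in $G_2$. Next I would decompose the event $\{\tau^{k+1}=t,\, N^{k+1}=m+1\}$ into two disjoint pieces: either (i) $N^k = m$ and the $(k+1)$-th default occurs in $G_1$, driving $N$ up from $m$ to $m+1$, or (ii) $N^k = m+1$ and the $(k+1)$-th default occurs in $G_2$, leaving $N$ unchanged at $m+1$. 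Multiplying each conditional density by $f_{\tau^k,N^k}(u,\cdot)$ and integrating $u$ over $(0,t)$ yields exactly the two convolution terms on the right-hand side of (\ref{p6}).

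For the base case I would note that at $t=0$ no defaults have occurred, so every name in $G_1$ has intensity $a$, every name in $G_2$ has intensity $\tilde{a}$, and the total first-jump rate is $n_1 a + n_2\tilde{a}$. The first defaulter lies in $G_1$ with probability $n_1 a/(n_1 a + n_2\tilde{a})$, which immediately produces the three initial-condition formulas $f_{\tau^1,N^1}(t,1) = n_1 a e^{-(n_1 a + n_2\tilde a)t}$, $f_{\tau^1,N^1}(t,0) = n_2 \tilde a e^{-(n_1 a + n_2\tilde a)t}$, and $f_{\tau^1,N^1}(t,m) = 0$ for $m \notin \{0,1\}$.

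The main obstacle is the conceptual, rather than computational, step of justifying that given $(\tau^k, N^k)$ the distribution of the subsequent interarrival time and the identity of the next defaulting group depend only on $(k,m)$ and not on the finer history. This is where one uses the fact that (\ref{p5}) contains no exponential-decay kernel, so the process $(N^k, k-N^k)$ is Markovian and the strong Markov property applies at each default epoch $\tau^k$; writing out this conditional independence precisely is the only subtle point in what is otherwise a direct convolution calculation.
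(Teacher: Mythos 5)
Your proposal is correct and follows essentially the same route as the paper's proof: condition on $N^k$, note that the interarrival time $\tau^{k+1}-\tau^k$ is exponential with rate $\zeta_{k,N^k}+\tilde{\zeta}_{k,N^k}$ while the next default falls in $G_1$ with probability $\zeta_{k,N^k}/(\zeta_{k,N^k}+\tilde{\zeta}_{k,N^k})$, and then convolve over the two ways of reaching $(t,m+1)$. You merely spell out more explicitly the competing-risks step, the disjoint decomposition of the event $\{N^{k+1}=m+1\}$, and the base case, all of which the paper leaves implicit in its closing ``then the result follows.''
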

\begin{proof}
Note that both $\tau^{k+1}-\tau^k$ and $N^{k+1}$ are independent of $\tau^k$, but
depend on $N^k$. Conditioning on $N^k$, $\tau^{k+1}-\tau^k$ and $N^{k+1}$ are independent.
The $(k+1)$th default rate $\lambda^{k+1}(t)$ that triggers $\tau^{k+1}$, given $N^k$, is given by
$$
\lambda^{k+1}(t)= \zeta_{k,N^k}+\tilde{\zeta}_{k, N^k} \quad {\rm for} \quad \tau^k < t \leq \tau^{k+1}.
$$
Hence, we have the following relations:
$$
P(\tau^{k+1}-\tau^k > t \mid N^k) = e^{-(\zeta_{k, N^k}+\tilde{\zeta}_{k,N^k})t}
$$
and
$$
f_{\tau^{k+1}-\tau^k \mid N^{k}}(t) = (\zeta_{k, N^k}+\tilde{\zeta}_{k,N^k}) e^{-(\zeta_{k, N^k}+\tilde{\zeta}_{k,N^k})t}.
$$
On the other hand,
$$
\begin{array}{lll}
P(N^{k+1}-N^{k}=1 \mid N^{k}) &=& P({\rm the} \ (k+1){\rm th \  default  \ happens \ in \ }G_1  \mid N^k )\\
&=& \displaystyle \frac{\zeta_{k,N^k}}{\zeta_{k,N^k}+\tilde{\zeta}_{k,N^k}},
\end{array}
$$
and similarly,
$$
P(N^{k+1}-N^{k}=0 \mid N^{k})= \frac{\tilde{\zeta}_{k,N^k}}{\zeta_{k,N^k}+\tilde{\zeta}_{k,N^k}},
$$
then the result follows.
\end{proof}

\begin{corollary}
Assume that $\beta_{k,m} \neq \beta_{i,j}$ for $1 \leq k \leq n, \max\{0,k-n_2\} \leq m \leq \min\{k,n_1\}$ and $i=0,\ldots, k-1, j=0,\ldots,m $. Then the joint density function of $\tau^{k}$ and $N^k$ is given by:
\begin{equation}
f_{\tau^k, N^k}(t,m) = \sum_{i=0}^{k-1} \sum_{j=0}^{m}\alpha_{k,m,i,j} e^{-\beta_{i,j} t},
\end{equation}
where the coefficients are given by the recursive formula:
$$
\left\{
\begin{array}{l}
\alpha_{k+1,m+1,i,j}= \left\{
\begin{array}{lll}
\displaystyle \frac{\alpha_{k,m+1,i,j} \tilde{\zeta}_{k,m+1}}{\beta_{k,m+1}-\beta_{i,j}} +\frac{\alpha_{k,m,i,j} \zeta_{k,m}}{\beta_{k,m}-\beta_{i,j}},& i=0,1,\ldots,k-1, j=0,\ldots,m\\
\displaystyle \frac{\alpha_{k,m+1,i,j} \tilde{\zeta}_{k,m+1}}{\beta_{k,m+1}-\beta_{i,j}}, &i=0,1,\ldots,k-1, j=m+1\\
\displaystyle -\sum_{u=0}^{k-1}\sum_{v=0}^{m+1}\frac{\alpha_{k,m+1,u,v} \tilde{\zeta}_{k,m+1}}{\beta_{k,m+1}-\beta_{u,v}},& i=k, j=m+1\\
\displaystyle -\sum_{u=0}^{k-1}\sum_{v=0}^{m}\frac{\alpha_{k,m,u,v} \zeta_{k,m}}{\beta_{k,m}-\beta_{u,v}},& i=k, j=m\\
0, &{\rm otherwise}
\end{array}
\right.\\
\displaystyle \beta_{i,j} = \zeta_{i,j}+\tilde{\zeta}_{i,j}
\end{array}
\right.
$$
and the boundry conditions are as follows:
$$
\alpha_{1,0,0,0}=n_2 \tilde{a},\  \alpha_{1,1,0,0}=n_1a,\  \alpha_{1,1,0,1}=0,\  \alpha_{1,m,i,j}=0,m \neq 0,1,
$$
and
$$
\alpha_{k+1,0,i,0}=
\left\{
\begin{array}{ll}
\displaystyle \frac{\tilde{\zeta}_{k,0}\alpha_{k,0,i,0}}{\beta_{k,0}-\beta_{i,0}},& i=0.\ldots,k-1\\
\displaystyle -\sum_{u=0}^{k-1}\frac{\tilde{\zeta}_{k,0}\alpha_{k,0,u,0}}{\beta_{k,0}-\beta_{u,0}},&i=k.
\end{array}
\right.
$$
\end{corollary}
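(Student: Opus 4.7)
I would prove Corollary 3 by induction on $k$, plugging the claimed ansatz into the recursion of Proposition 4 and matching coefficients exponential‑by‑exponential.

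\textbf{Setup and base case.} The base case $k=1$ is immediate from the initial data $f_{\tau^1,N^1}(t,1)=n_1 a e^{-(n_1a+n_2\tilde{a})t}$ and $f_{\tau^1,N^1}(t,0)=n_2\tilde{a} e^{-(n_1a+n_2\tilde{a})t}$: since $\beta_{0,0}=n_1 a + n_2\tilde{a}$, the ansatz is satisfied with $\alpha_{1,1,0,0}=n_1 a$, $\alpha_{1,0,0,0}=n_2\tilde{a}$, and $\alpha_{1,m,i,j}=0$ for all other indices, matching the stated boundary conditions.

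\textbf{Inductive step.} Assume $f_{\tau^k,N^k}(t,m)=\sum_{i=0}^{k-1}\sum_{j=0}^{m}\alpha_{k,m,i,j}e^{-\beta_{i,j}t}$ for all admissible $m$. The key computation is the elementary identity
\begin{equation*}
\int_0^t e^{-\beta_{i,j}u}\,e^{-\beta(t-u)}\,du \;=\; \frac{e^{-\beta_{i,j}t}-e^{-\beta t}}{\beta-\beta_{i,j}},
\end{equation*}
which is valid under the non‑degeneracy hypothesis $\beta_{i,j}\neq\beta$. Substituting the inductive hypothesis into the recursion of Proposition 4 for $f_{\tau^{k+1},N^{k+1}}(t,m+1)$ and applying this identity with $\beta=\beta_{k,m+1}$ to the first integral and $\beta=\beta_{k,m}$ to the second, I collect terms of the form $e^{-\beta_{i,j}t}$. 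Each old frequency $\beta_{i,j}$ with $i\le k-1$ and $j\le m$ is produced by both integrals, yielding the first branch of the recursion for $\alpha_{k+1,m+1,i,j}$. The frequency $\beta_{i,j}$ with $i\le k-1$ and $j=m+1$ appears only in the first integral, giving the second branch. The two newly appearing exponentials $e^{-\beta_{k,m+1}t}$ and $e^{-\beta_{k,m}t}$ pick up the ``subtraction'' contributions, producing exactly the third and fourth branches $\alpha_{k+1,m+1,k,m+1}$ and $\alpha_{k+1,m+1,k,m}$. All remaining index positions are identically zero, matching the ``otherwise'' case.

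\textbf{Boundary $m=0$.} The recursion in Proposition 4 is written for $N^{k+1}=m+1\ge 1$; to cover $N^{k+1}=0$, I would observe that $N^{k+1}=0$ forces $N^k=0$ and the $(k+1)$th default to occur in $G_2$, yielding the reduced convolution $f_{\tau^{k+1},N^{k+1}}(t,0)=\tilde{\zeta}_{k,0}\int_0^t f_{\tau^k,N^k}(u,0)\,e^{-\beta_{k,0}(t-u)}\,du$. Substituting the ansatz and using the same integral identity with only one family of terms gives precisely the displayed recursion for $\alpha_{k+1,0,i,0}$.

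\textbf{Anticipated difficulty.} The analytical steps are routine convolutions; the real work is bookkeeping. In particular, I will have to track carefully how the index range expands from $(i\le k-1,\, j\le m)$ to $(i\le k,\, j\le m+1)$, distinguish the corner cases where $j=m+1$ or $i=k$, and confirm that the non‑degeneracy hypothesis $\beta_{k,m}\ne\beta_{i,j}$ is exactly what is needed to divide by $\beta_{k,m}-\beta_{i,j}$. A secondary subtlety is ensuring that the extremal cases $m=\max\{0,k+1-n_2\}$ and $m=\min\{k+1,n_1\}$ are consistent with the recursion: in these cases one of $\zeta_{k,\cdot}$ or $\tilde{\zeta}_{k,\cdot}$ vanishes, so the corresponding branch drops out and no spurious coefficients are introduced. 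Once these corner cases are verified, the induction closes.
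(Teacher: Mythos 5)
Your proposal is correct and follows exactly the route the paper intends: the paper omits the proof of this corollary, treating it (as with Corollary 1) as the result of iterating the recursion of Proposition 4, and your induction with the convolution identity $\int_0^t e^{-\beta_{i,j}u}e^{-\beta(t-u)}\,du=(e^{-\beta_{i,j}t}-e^{-\beta t})/(\beta-\beta_{i,j})$ is precisely that computation, with the coefficient matching, the separate $m=0$ branch, and the role of the non-degeneracy hypothesis all handled correctly.
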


As a result, the unconditional density function of $\tau^k$ is given by
$$
f_{\tau^k}(t) = \sum_{m=\max\{0,k-n_2\}}^{\min\{k,n_1\}}f_{\tau^{k}, N^{k}}(t,m).
$$

\section{Evaluation of Basket CDS and Sensitivity Study}

Consider a $k$th to default basket CDS with maturity $T$.
Assume $S_k$ is the $k$th swap rate, and $R$
is the recovery rate and $r$ is the annualized riskless interest rate.
The protection buyer A pays a periodic fee $S_k \Delta_i$
to the protection seller $B$ at time $t_i$, $i=1,2,\ldots,N$,
where $0=t_0<t_1<\ldots<t_N=T$ and $\Delta_i=t_i-t_{i-1}$.
If the $k$th default happens in the interval $[t_j,t_{j+1}]$,
$A$ will also pay $B$ the accrued default premium up to $\tau^k$.
On the other hand, if $\tau^k \leq T$, $B$ will pay $A$ the loss occurred at $\tau^k$,
that is, $1-R$.
Then the swap rate $S_k$ is given by
\begin{equation}{\label S}
S_k=\frac{(1-R)E(e^{-r\tau^k} 1_{\{\tau^k \leq T\}})}{\sum_{i=1}^{N}E(\Delta_i e^{-rt_i} 1_{\{\tau^k > t_i\}}+(\tau^k-t_{i-1})e^{-r\tau^k} 1_{\{t_{i-1} <\tau^k \leq t_i \}})}.
\end{equation}
{
We remark that one can reproduce the results in Zheng \& Jiang (2007) 
with our proposed methods  in a reasonably high accuracy.}

In Table 1, we present the swap rates under the intensity-based default contagion model with exponential decay. We compute the swap rates with $n=2, k=2$, by selecting different values of the parameters. One can see that when we fix $a, d$, the swap rate decreases while $c$ decreases. And when $a, c$ are fixed, the swap rate falls down while $d$ goes up. When $c, d$ are fixed, the swap rate increases when $a$ does.

Table 2 presents the swap rates under the two-state, Markovian regime-switching,
intensity-based default contagion model. Condition 1 is equivalent to the homogeneous case as $x_1=x_2$. And we can see that if $x_1<x_2$, swap rates increase when $\eta_1$ increases, while swap rates decrease when $\eta_2$ increases.

We present the swap rates under the heterogeneous intensity-based default contagion model in Table 3, where there are
two groups of entities and the intensity process follows (\ref{p5}). We conduct our numerical experiments via selecting different value of the parameters. As shown in Table 3, Conditions 1, 3 are equivalent to the homogeneous case for $b=\tilde{b}=c=\tilde{c}$. Condition 2 means that defaults in each group have a significant
impact on the entities of the same group, but have a marginal impact on the other group. Condition 4 shows the infection caused by entities in $G_1$ is much more intense than those in $G_2$. We also make comparison on the analytic pricing approach (AP) presented previously and Monte Carlo approach (MC) in this case. The analytic approach takes less than one second to compute all the swap rates of one column with MATLAB on a computer with an Intel 3.2 GHz CPU, while the Monte Carlo approach takes more than 5 minutes to run 100,000 simulations.

To study the sensitivities of the swap rates to a change in underlying parameters,
we presents the derivatives representing the sensitivities in the homogeneous case with the intensity (\ref{inten1}).
Note that
$$
\frac{\partial f_{\tau^k}(t)}{\partial a}=\sum_{j=0}^{k-1}\alpha_{k,j}(1-\beta_j at)e^{-\beta_j at},
$$
and
$$
\frac{\partial f_{\tau^k}(t)}{\partial c}=\sum_{j=0}^{k-1}(\alpha^{'}_{k,j} -\alpha_{k,j}(n-j)jat) ae^{-\beta_j at},
$$
where
$$
\left\{
\begin{array}{l}
\alpha^{'}_{k+1,j}= \left\{
\begin{array}{lll}
\displaystyle \alpha_{k,j} \gamma^{'}_{k,j}+ \alpha^{'}_{k,j} \gamma_{k,j},& j=0,1,\ldots,k-1\\
\displaystyle -\sum_{u=0}^{k-1}\alpha_{k,u} \gamma^{'}_{k,u}+ \alpha^{'}_{k,u} \gamma_{k,u},& j=k
\end{array}
\right.\\
\displaystyle \gamma_{k,j} = \frac{\beta_k}{\beta_k-\beta_j},\\
\displaystyle \gamma^{'}_{k,j} = \frac{\partial \gamma_{k,j}}{\partial c},
\end{array}
\right.
$$
and $\alpha^{'}_{1,0}=0$, $\alpha_{k,j}$ and $\beta_{j}$ are given in Corollary 1.
Then combining (\ref{S}), One can have
$$
\displaystyle \theta_{k}(a)=\frac{\partial S_k}{\partial a}
\quad {\rm and}  \quad \theta_{k}(c)=\frac{\partial S_k}{\partial c}.
$$
We present the derivatives $\theta_{k}(a)$,
$\theta_{k}(c)$ by using the recursive formula(AP) above (Figure 1, 2 (Left)), while we also use Monte Carlo method (MC) and difference quotient to find derivatives (Figure 1, 2 (Right)), { where we select step size as 0.1 in MC for difference quotient and 100,000 simulations have been done}.
We remark that using the Monte Carlo method to compute the derivatives is very time consuming,  and the results are unsatisfying, i.e., $\theta_1(c)$ is constant zero by definition,
while the results are quite fluctuating by Monte Carlo.

\begin{figure}
  \includegraphics[width=8cm]{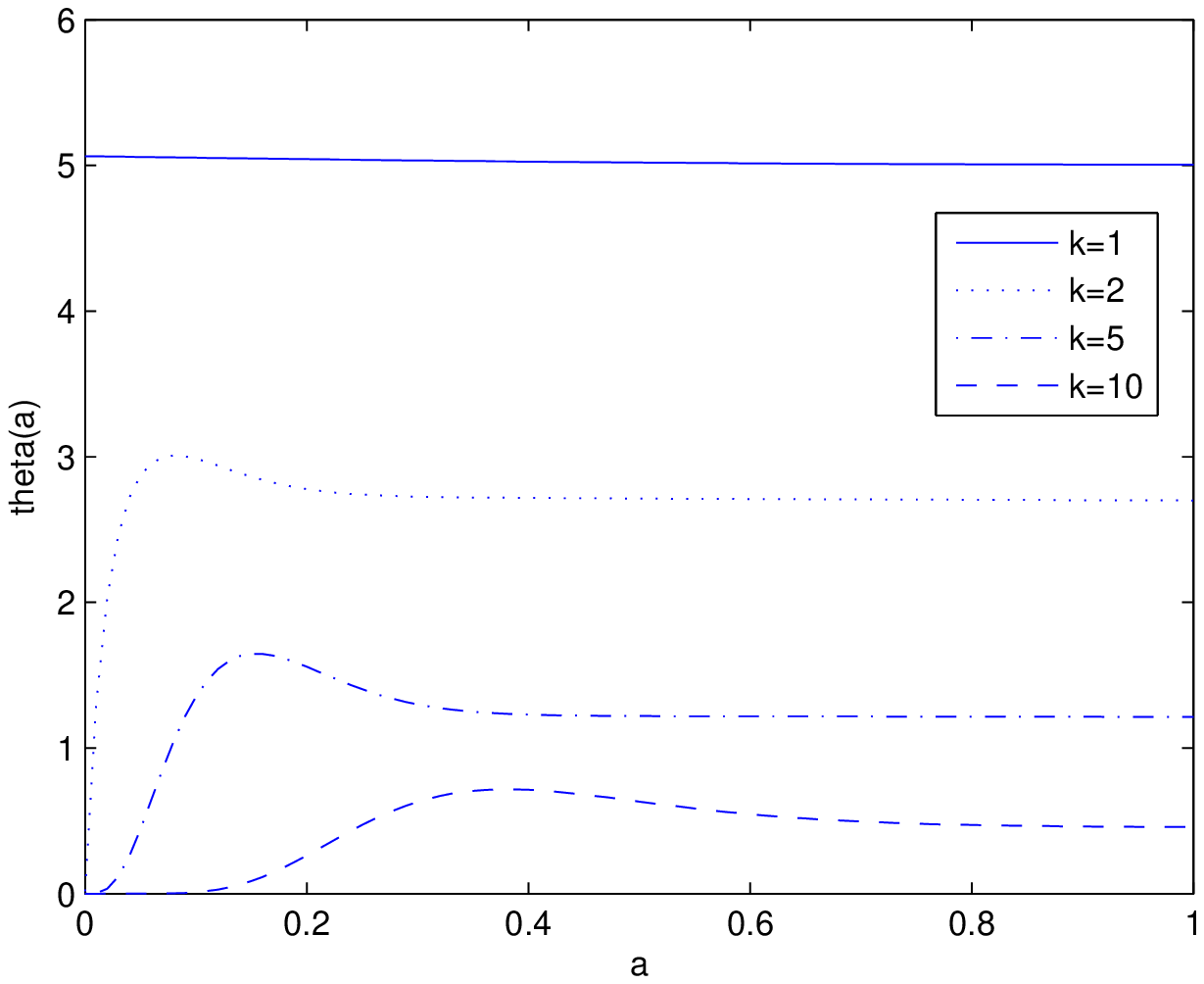} \includegraphics[width=8cm]{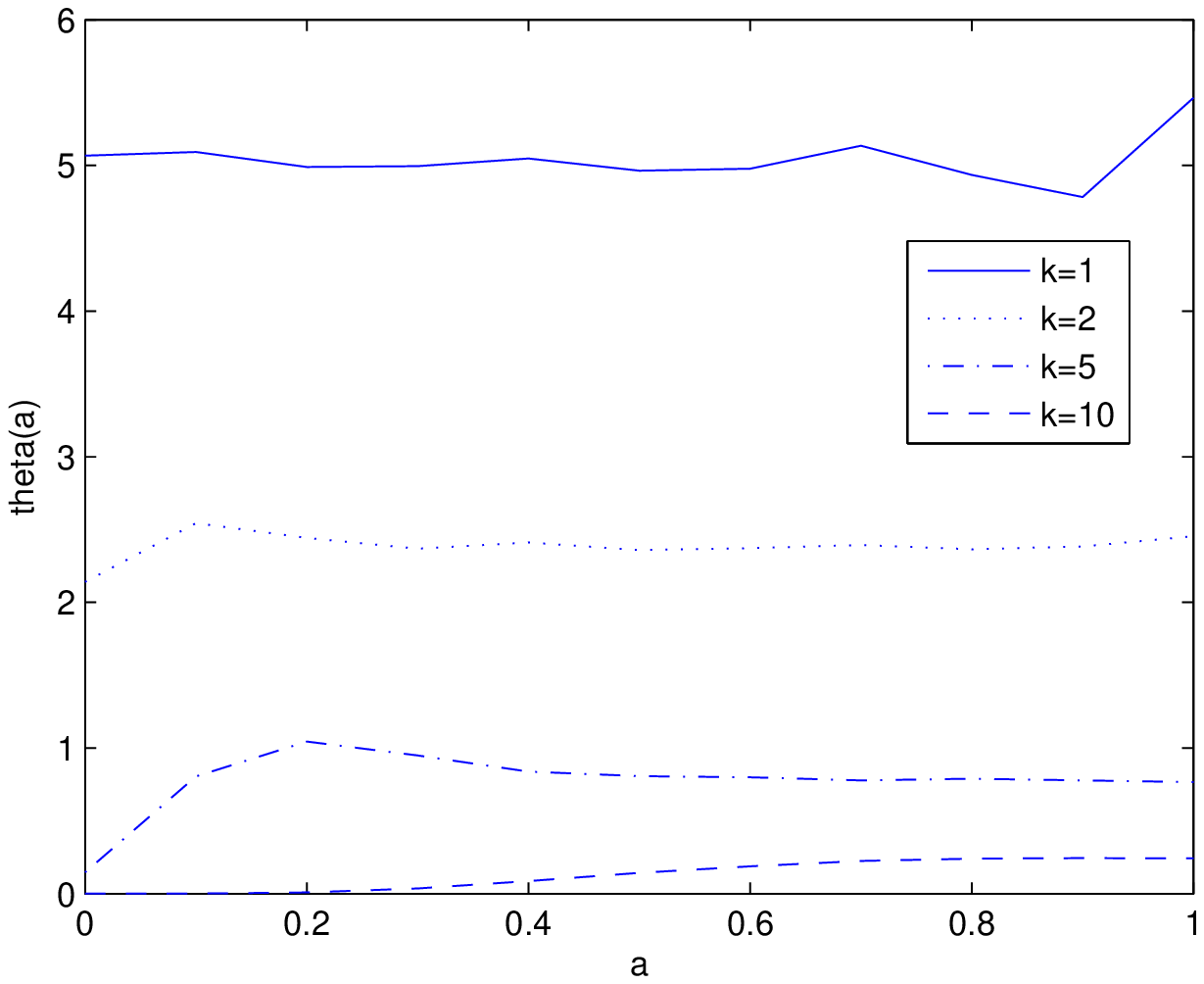}\\
  \caption{The derivatives of swap rates $S_k$ with respect to $a$ [left:AP, right:MC] ($n$=10, $c$=0.3)}\label{}
\end{figure}

\begin{figure}
  \includegraphics[width=8cm]{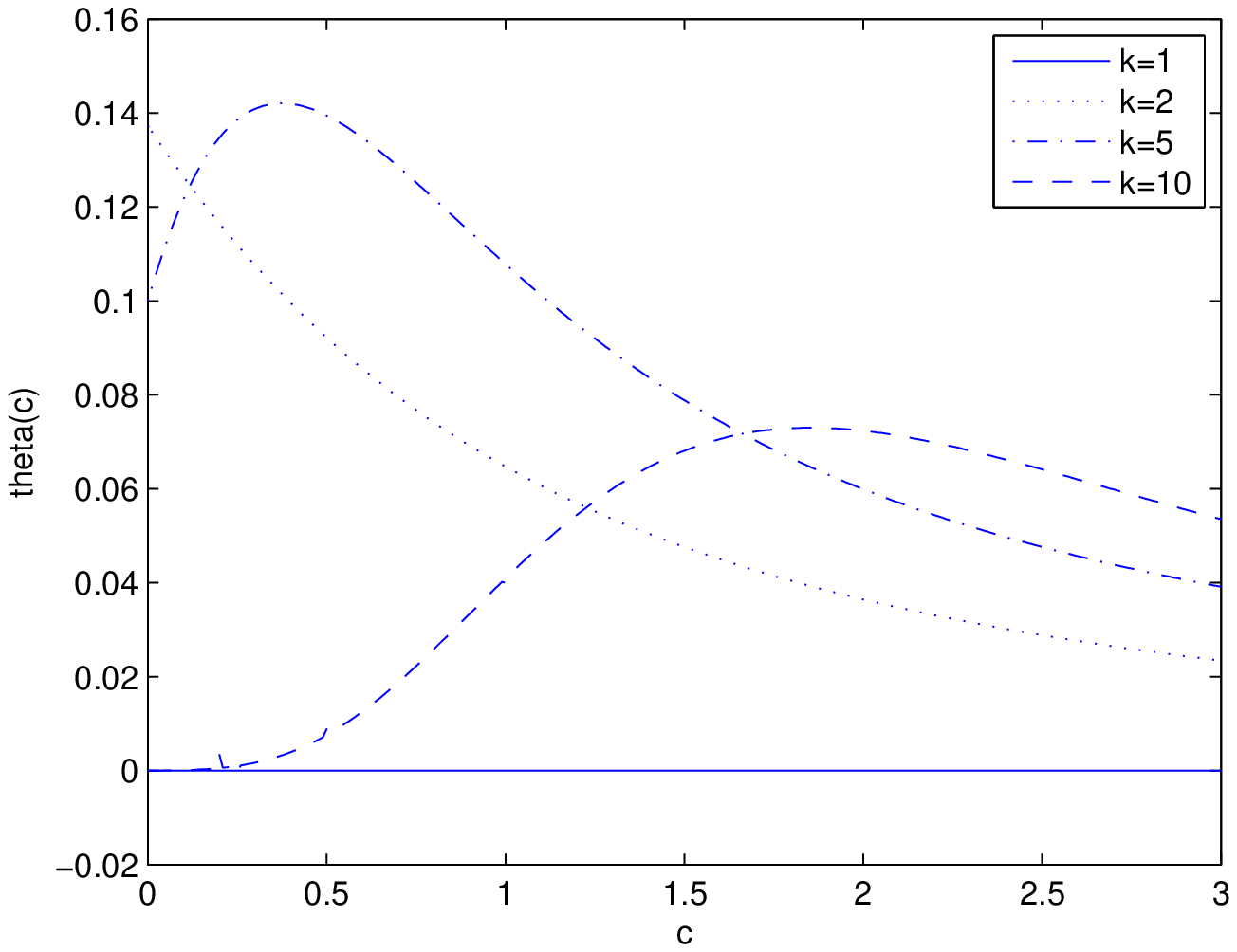} \includegraphics[width=8cm]{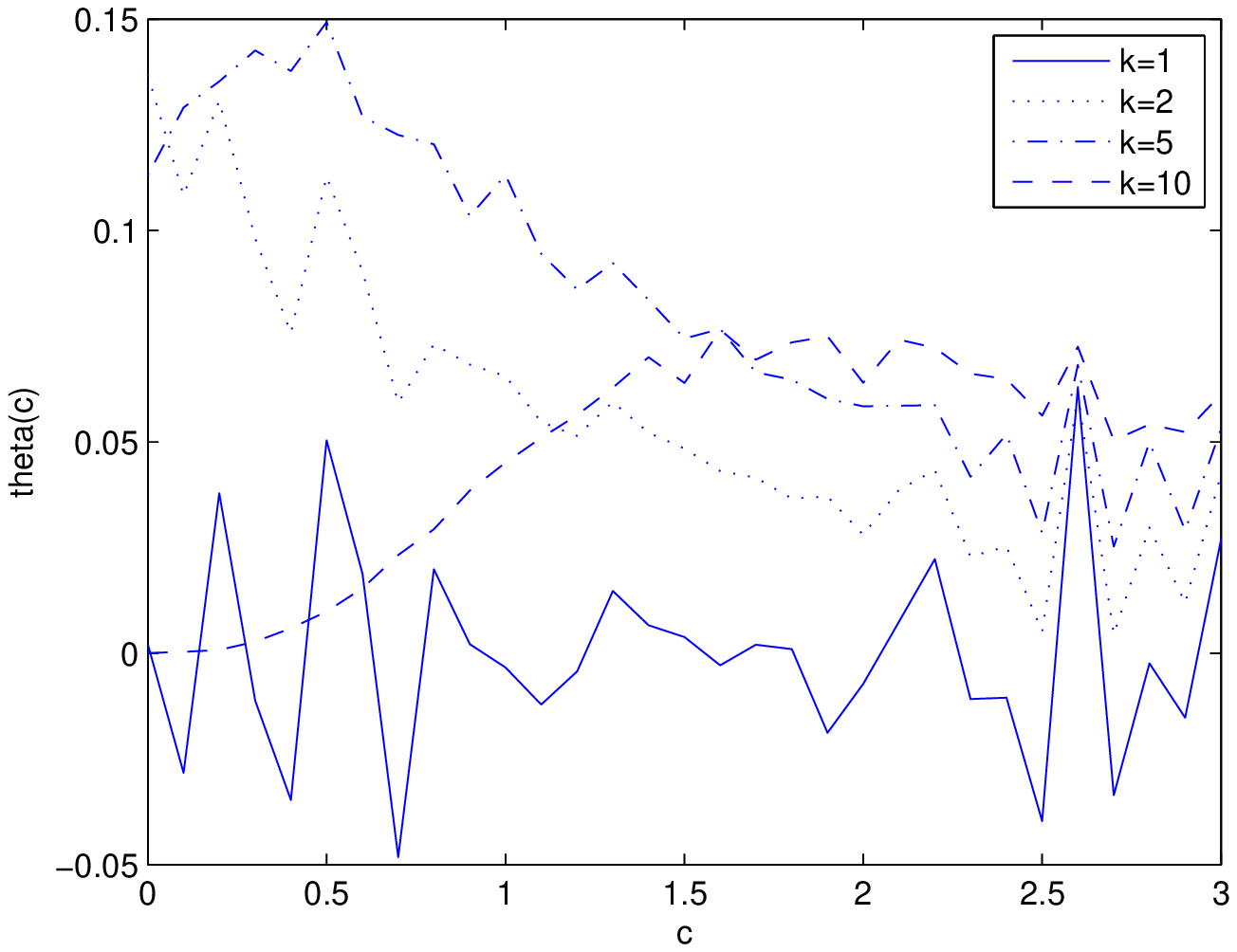}\\
  \caption{The derivatives of swap rates $S_k$ with respect to $c$ [left:AP, right:MC] ($n$=10, $a$=0.1)}\label{}
\end{figure}

\begin{table}\label{table1}
\caption{Basket CDS rates with exponential decay intensity-based default model
($n=2, k=2,  T=3, \Delta=0.5, R=0.5, r=0.05$) }
        \centering
                \begin{tabular}{|c|c|ccc|}
     \hline
                &$c$&0.2&1&5\\
\hline
                $a$&$d$ & & & \\
                \hline
        0.1&0.001&0.0134&0.0211&0.0479\\
           &0.01 &0.0134&0.0210&0.0477\\
           &0.1  &0.0132&0.0203&0.0459\\
           &1    &0.0123&0.0160&0.0322\\
           &10   &0.0115&0.0120&0.0147\\
           &100  &0.0114&0.0114&0.0117\\

                \hline
                1&0.001&0.3654&0.4961&0.7529\\
                 &0.01 &0.3651&0.4955&0.7526\\
                 &0.1  &0.3626&0.4898&0.7502\\
                 &1    &0.3464&0.4390&0.7184\\
                 &10   &0.3262&0.3447&0.4392\\
                 &100  &0.3222&0.3242&0.3342\\
                 \hline
                \end{tabular}
        \label{tab:1}
\end{table}

\begin{table}
\centering
\caption{Basket CDS rates in two-state stochastic intensity process case ($n=10, T=3, \Delta=0.5, R=0.5, r=0.05, c=3, X(0)=x_1$,
Condition 1: $x_1=1, x_2=1, \eta_1=1, \eta_2=1$,
Condition 2: $x_1=1, x_2=2, \eta_1=1, \eta_2=1$,
Condition 3: $x_1=1, x_2=2, \eta_1=1, \eta_2=2$,
Condition 4: $x_1=1, x_2=2, \eta_1=2, \eta_2=1$) }

\begin{tabular}{|c|c|c|c|c|}
  \hline
 $k$ & Condition 1 & Condition 2 & Condition 3 & Condition 4 \\
 \hline
 1 & 5.0242 & 5.2507 & 5.2409 & 5.4575   \\
 2 & 3.9288 & 4.1170 & 4.1087 & 4.2891   \\
 3 & 3.4456 & 3.6184 & 3.6106 & 3.7766   \\
 4 & 3.1369 & 3.3005 & 3.2930 & 3.4503   \\
 5 & 2.9035 & 3.0605 & 3.0532 & 3.2043   \\
 6 & 2.7070 & 2.8588 & 2.8516 & 2.9979  \\
 7 & 2.5270 & 2.6743 & 2.6672 & 2.8093   \\
 8 & 2.3473 & 2.4904 & 2.4833 & 2.6214   \\
 9 & 2.1459 & 2.2847 & 2.2775 & 2.4114  \\
10 & 1.8608 & 1.9945 & 1.9870 & 2.1159   \\
  \hline
\end{tabular}
\end{table}

\begin{table}
\centering
\caption{Basket CDS rates in heterogeneous case with 2 groups
($n_1=5, n_2=5, T=3, \Delta=0.5, R=0.5, r=0.05$,
Condition 1: $a=\tilde{a}=1, b=\tilde{b}=c=\tilde{c}=3$,
Condition 2: $a=\tilde{a}=1, b=\tilde{c}=3, \tilde{b}=c=0.3$,
Condition 3: $a=\tilde{a}=1, b=\tilde{b}=c=\tilde{c}=0.3$,
Condition 4: $a=\tilde{a}=1, b=\tilde{b}=3, c=\tilde{c}=0.3$)}
\begin{tabular}{|c|cc|cc|cc|cc|}
  \hline
 $k$ & Condition &1 & Condition &2 & Condition &3 & Condition &4 \\
 \hline
   &AP      & MC      &AP   & MC  &AP &  MC  &AP &MC  \\
 \hline
 1 & 5.0242 & 5.0265  & 5.0242 & 5.0352&5.0242 &5.0205 &5.0242 & 5.0463 \\
 2 & 3.9288 &  3.9352 & 3.4752 & 3.4692&2.7073 & 2.7167&3.2065  & 3.2167\\
 3 & 3.4456 &  3.4510 & 2.8287 & 2.8245&1.9036 & 1.9123&2.5866   &2.5922\\
 4 & 3.1369 & 3.1417  &2.4246 & 2.4209&1.4799 & 1.4860 &2.2543  & 2.2567\\
 5 & 2.9035 & 2.9062  &2.1161 &2.1135&1.2081 & 1.2095 &2.0302  & 2.0333\\
 6 & 2.7070 & 2.7068  &1.8376 & 1.8366&1.0112 & 1.0116 &1.8554  &1.8549\\
 7 & 2.5270 &  2.5270 &1.6445  & 1.6392&0.8550 & 0.8535 &1.7036  & 1.7013\\
 8 & 2.3473 & 2.3477 &1.4821  &1.4757&0.7203 & 0.7205 &1.5582  & 1.5545\\
 9 & 2.1459 & 2.1440 &1.3215 &1.3171&0.5921 & 0.5920 &1.4015  &1.3985\\
10 & 1.8608 & 1.8625 &1.1169  &1.1096&0.4451 & 0.4448 &1.1889  & 1.1851\\
  \hline
\end{tabular}
\end{table}

\section{Concluding Remarks}

In this paper we propose a simple recursive method
to compute the $k$th default time distribution under the interacting intensity default contagion model  (\ref{e1}).
We simplify the problem in the homogeneous case with exponential decay (\ref{p1})
and  with Markovian regime switching stochastic intensity (\ref{p3}). We further consider the problem in a two-group heterogeneous case  (\ref{p5}). 
We then present the numerical results for
the basket CDS rates and { sensitivity study} using the proposed method.
The main advantage of this method is that, by using the $k$th default rate, one can deduce the distributions of $k$th default times by recursive formulas with which one can easily compute CDS rates.
Moreover, the proposed method can also be applied to the heterogeneous case to obtain the analytic expressions of CDS rates.
Another key advantage is that, one can have the analytic formulas of the derivatives of the swap rates to the underlying parameters.
The analytic formulas are fast and accurate while the Monte Carlo method is slow and inaccurate as the numerical experiment reveals.\\

\noindent
{\bf Acknowledgment:} The authors would like to thank Prof. Mark H.A. Davis for
his helpful discussions and suggestions.

\end{document}